\newtheorem{theorem}{Theorem}
\newtheorem{corollary}{Corollary}
\theoremstyle{remark}
\newtheorem{remark}{Remark}
\newenvironment{acknowledgement}{\par\medskip\noindent\emph{Acknowledgment.}}
\renewcommand\d{\mathrm d}
\def\le{\leqslant}
\def\ge{\geqslant}
\def\Chi{\raisebox{.4ex}{$\chi$}}
\DeclareMathOperator{\tr}{tr}
\DeclareMathOperator{\supp}{supp}
\def\esssup{\mathop\textup{ess\,sup}}
\def\essinf{\mathop\textup{ess\,inf}}
\newcommand{\R}{\mathbb{R}}
\newcommand{\Z}{\mathbb{Z}}
\newif\ifper\pertrue
\def\per{.}
\newcounter{aucount}
\def\HarvardComma{,}
\newif\ifedplural
\def\au#1#2{{#1 #2}\addtocounter{aucount}{1}}
\def\lau#1#2{{#1 #2},\setcounter{aucount}{0}}
\def\ed#1#2{\ifnum\theaucount=0(\fi{#1 #2}\addtocounter{aucount}{1}}
\def\led#1#2{\ifnum\theaucount=0(\edpluralfalse\else\edpluraltrue\fi{#1
    #2} (\editorname.)):\setcounter{aucount}{0}}
\def\editorname{\ifedplural Eds\else Ed\fi}
\def\et{\ifnum\theaucount=1\else\HarvardComma\fi{} and\ }
\def\ti#1{\emph{#1},}
\def\bti{\@ifnextchar[\bbti\bbbti}
\def\bbti[#1]#2{\emph{#2}, #1,}
\def\bbbti#1{\emph{#1},}
\def\z{\@ifnextchar[\zz\zzz}
\def\zz[#1]#2#3#4#5{\perfalse{#2} \textbf{#3} (#5), #4 [#1]\per}
\def\zzz#1#2#3#4{{#1} \textbf{#2} (#4), #3\ifper\per\fi\pertrue}
\def\pub{\@ifstar\pubstar\pubnostar}
\def\pubnostar{\@ifnextchar[\@@pubnostar\@pubnostar}
\def\@@pubnostar[#1]#2#3#4{#1, #2, #3, #4\per}
\def\@pubnostar#1#2#3{#1, #2, #3\ifper\per\fi\pertrue}
\def\pubstar[#1]#2#3#4{\perfalse #2, #3, #4 [#1]\per}
        \def\@oddhead{{\scriptsize
            \begin{tabular}{@{}l@{$\;$}l@{$\;$}l@{$\;$}l@{}r@{}}
            To & appear & in & the & PROCEEDINGS$\;\,$OF$\;\,$THE\\[0.2ex]
            \multicolumn{5}{@{}l@{}}{AMERICAN$\;\,$MATHEMATICAL$\;\,$SOCIETY}
            \end{tabular}
          }\hss}
        \def\@oddfoot{}%
\begin{document}

\title[A lower bound for the density of states]{A lower bound for the
density of states of the lattice Anderson model}

\author[P.\ D.\ Hislop]{Peter D.\ Hislop}
\address{Department of Mathematics,
    University of Kentucky,
    Lexington, Kentucky  40506-0027, USA}
\email{hislop@ms.uky.edu}

\author[P.\ M\"uller]{Peter M\"uller}
\address{Institut f\"ur Theoretische Physik,
  Georg-August-Universit\"at G\"ottingen,
  Friedrich-Hund-Platz 1,
  37077 G\"ottingen, Germany}
\email{peter.mueller@physik.uni-goe.de}

\copyrightinfo{}{American Mathematical Society}
\date{}

\dedicatory{Dedicated to Jean-Michel Combes on the occasion of his
  65$^{\mbox{th}}$ birthday}


\begin{abstract}
  We consider the Anderson model on the multi-dimensional cubic
  lattice and prove a positive lower bound on the density of states
  under certain conditions.  For example, if the random variables are
  independently and identically distributed and the probability
  measure has a bounded Lebesgue density with compact support, and if this
  density is essentially bounded away from zero on its
  support, then we prove that the density of states is strictly
  positive for Lebesgue-almost every energy in the deterministic
  spectrum.
\end{abstract}

\maketitle


Wegner's estimate, originally formulated in \cite{Weg81} for the
Anderson model on the lattice $\mathbb{Z}^{d}$, is one of the
celebrated tools in the theory of random Schr\"odinger operators, see
e.g.\ the recent reviews \cite{Ves04, KiMe07} or \cite{CoHi06} for the
latest developments in the case of continuum random Schr\"odinger
operators. In its strongest form for lattice models, a Wegner estimate
provides Lipschitz continuity of the integrated density of states
$N(E)$.  In particular, this implies that the Lebesgue derivative of
$N(E)$, the density of states $n(E)$, exists as a function which is
essentially bounded from above. In addition to the upper bound for the
density of states, Wegner also presented an argument for a strictly
positive \emph{lower} bound for the density of states of the Anderson
model in his original article \cite{Weg81}.  Although insightful,
Wegner's argument is not complete as his nonzero lower bound
vanishes in the macroscopic limit.

In this note, we give a mathematical proof of a positive lower bound
for the density of states of the Anderson model. For many years,
efforts have been concentrated on Wegner's upper bound because this is
essential for the continuity of the integrated density of states, the
existence of the density of states, and for Anderson localization.
Consequently, Wegner's idea to obtain a lower bound seems to have
remained fairly unnoticed.  The lower bound is, however, essential for
Minami's proof that the energy level statistics for energies in the
strong localization regime is Poissonian \cite{minami1}.  Minami fixes
an energy $E$ in the region of complete localization at which the
fractional moment bounds of Aizenman and Molchanov \cite{AM1} hold.
He assumes that $n(E) > 0$. Minami then proves that the rescaled local
eigenvalue level spacing measure $\d\mu_L (x) = \sum_j \delta \bigl(
L^d (\varepsilon_j(L) - E) - x \bigr)\,\d x$, for the finite-volume
Hamiltonian (see below for the definition) with eigenvalues
$\varepsilon_j(L)$, converges in the macroscopic limit to a Poisson
distribution with density given by $n(E)$.  Here, we prove the
positivity of the density of states at almost every energy in the
deterministic spectrum. We mention that Molchanov \cite{molchanov1}
studied the same question of energy-level statistics for the
one-dimensional Russian school model. In his paper, he also proves the
positivity of the density of states for that model using completely
different methods.

The Anderson model is given by the discrete random Schr\"odinger
operator $H$ on a probability space $(\Omega,\mathbb{P})$ whose
realizations $H^{(\omega)} := \mathcal{L} + V^{(\omega)}$,
$\omega\in\Omega$, act as
\begin{equation}
  \label{H}
  (H^{(\omega)}\varphi)(x) = (\mathcal{L} \varphi)(x) + \omega_{x} \varphi(x)
\end{equation}
for all $x\in\mathbb{Z}^{d}$ on a dense domain of $\varphi
\in\ell^{2}(\mathbb{Z}^{d})$. Here, the discrete Laplacian
$\mathcal{L}$ is defined as $(\mathcal{L} \varphi)(x) :=
\sum_{y\in\mathbb{Z}^{d}: {|x-y|=1}} \varphi(y)$, and has purely
absolutely continuous spectrum $\sigma(\mathcal{L}) = [ -2d, 2d]$. The
random potential $V^{(\omega)}$ consists of a family
$\{\omega_{x}\}_{x\in\mathbb{Z}^{d}}$ of independent, identically
distributed real-valued random variables on $\Omega$.

The Schr\"odinger operator $H$ of the Anderson model is known
\cite{CaLa90, PaFi92} to be almost surely essentially self-adjoint on
the dense subspace $\{ \varphi\in\ell^{2}(\mathbb{Z}^{d}):
\supp\varphi \text{~~compact}\}$. Moreover, $H$ is ergodic with
respect to lattice translations. To define the integrated density of
states, we consider finite volumes $\Lambda \subset \Z^d$ and the
Dirichlet restriction $H_{\Lambda}^{(\omega)} := \mathcal{L}_{\Lambda}
+ V^{(\omega)}_\Lambda$ of $H$ to the finite-dimensional Hilbert space
$\ell^{2}(\Lambda)$, where $V^{(\omega)}_\Lambda$ is the restriction
of $V^{( \omega)}$ to $\Lambda$ and
\begin{equation}
  \label{LL}
  (\mathcal{L}_{\Lambda}\varphi)(x) :=  \sum_{y\in\Lambda: {|x-y|=1}}
  \varphi(y) +  \varphi(x) \Biggl( \sum_{y\notin\Lambda: {|x-y|=1}} 1 \Biggr)
\end{equation}
for all $x\in\Lambda$ and all $\varphi \in \ell^{2}(\Lambda)$.  Note
that the rightmost term in the above definition of the Dirichlet
Laplacian $\mathcal{L}_{\Lambda}$ ensures the Dirichlet-decoupling
estimate $\mathcal{L}_{\Lambda_{1} \cup \Lambda_{2}} \le
\mathcal{L}_{\Lambda_{1}} \oplus \mathcal{L}_{\Lambda_{2}}$, see also
\cite{Sim85, KiMu06}. We write $\tr_{\Lambda}$ for the trace on
$\ell^{2}(\Lambda)$ and let $\Chi_{B}$ stand for the indicator
function of some set $B \subset \mathbb{R}$. Then ergodicity implies
that the \emph{integrated density of states} $E \in \mathbb{R} \mapsto
N(E)$ is given by the non-random limit
\begin{equation}
  \label{Ndef}
  N(E) = \lim_{\Lambda \uparrow\mathbb{Z}^{d}} \left[ \frac{1}{|\Lambda|}
    \tr_{\Lambda} \left(\Chi_{]-\infty,  E]} (H_{\Lambda}^{(\omega)})
    \right)  \right]
\end{equation}
along a sequence of expanding cubes $\Lambda \subset\mathbb{Z}^{d}$.
Equation~\eqref{Ndef} holds for all $E\in\mathbb{R}$ that are continuity
points of $N(E)$ and all $\omega\in\Omega$ except for a $\mathbb{P}$-null set,
which can be chosen uniformly with respect to the aforementioned values of
$E$.

If the single-site distribution of, say, $\omega_{0}$ happens to be absolutely
continuous with respect to Lebesgue measure and if its Lebesgue density $\rho$
satisfies the additional assumption
\begin{equation}
  \label{rho-above}
  \rho_{\mathrm{max}} := \esssup_{ w\in \mathbb{R}}\; \{\rho(w)\} < \infty,
\end{equation}
then, according to Wegner \cite{Weg81}, the integrated density of states $N(E)$
is Lipschitz continuous, hence absolutely continuous and the Lebesgue
derivative of $N(E)$, the \emph{density of states},
\begin{equation}
  \label{ndef}
  E \in \mathbb{R}  \mapsto n(E) := \d N(E)/\d E,
\end{equation}
obeys the estimate $n(E) \le \rho_{\mathrm{max}}$ for
Lebesgue-almost all $E\in\mathbb{R}$.
Another consequence of ergodicty of the Schr\"odinger operator
$H$ is that there is a closed set $\Sigma \subset \mathbb{R}$
such that $\sigma (H^{(\omega)} ) = \Sigma$ with probability one. This set,
called the {\em deterministic spectrum} of $H$, is given by
$\Sigma = [ -2d, 2d] + \supp \rho$, for the model in (\ref{H}).

We will not assume \eqref{rho-above} for the validity of the lower
bound for $n(E)$.

\begin{theorem}
  \label{main}
  Let $H$ be the random Schr\"odinger operator \eqref{H} of the
  Anderson model. Assume that the single-site distribution of
  $\omega_{0}$ is absolutely continuous with respect to Lebesgue
  measure and that its Lebesgue density $\rho$ is essentially bounded away
  from
  zero on some interval $[W_-, W_+]$ in the sense that
  \begin{equation}
    \label{rho-below}
    \rho_{\mathrm{min}} := \essinf\limits_{ w\in [W_{-},W_{+}]} \;
    \{\rho(w)\} > 0
  \end{equation}
  for some $-\infty < W_{-} < W_{+} < \infty$. Assume further that the
  integrated density of states $N(E)$ is an absolutely continuous
  function with Lebesgue derivative $n(E)$ as in \eqref{ndef}.
  Then, for every $\delta >0$
  (small enough) there exists a strictly positive constant
  $C_{\delta}$ such that
  \begin{equation}
    \label{main-eq}
    n(E) \ge C_{\delta} > 0,
  \end{equation}
  for Lebesgue-almost all $E \in [-2d + W_{-} +\delta, 2d + W_{+}
  -\delta]$.
\end{theorem}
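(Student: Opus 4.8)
\emph{Proof plan.} The plan is to relate $n(E)$ to the $\omega$-averaged local spectral measure at the origin and then exploit rank-one spectral averaging in the single variable $\omega_0$, using the hypothesis $\rho\ge\rho_{\mathrm{min}}\Chi_{[W_-,W_+]}$.

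First, since $N$ is assumed absolutely continuous, $n(E)=\lim_{\eta\downarrow0}(2\eta)^{-1}\bigl(N(E+\eta)-N(E-\eta)\bigr)$ for a.e.\ $E$, so it is enough to bound $N(E+\eta)-N(E-\eta)$ from below by a constant times $\eta$, uniformly for small $\eta$. I would use that, by ergodicity, the density-of-states measure $\d N$ is the $\omega$-average of the spectral measure of the canonical basis vector $\delta_0$ at the origin, i.e.\ $N(E_2)-N(E_1)=\mathbb E\bigl[\langle\delta_0,\Chi_{]E_1,E_2]}(H^{(\omega)})\delta_0\rangle\bigr]$. Freezing $\omega^{(0)}:=(\omega_x)_{x\neq0}$ turns $\omega_0\mapsto H^{(\omega)}=H^{(\omega^{(0)},\,\omega_0=0)}+\omega_0|\delta_0\rangle\langle\delta_0|$ into a rank-one family, for which the spectral averaging identity $\int_{\R}\langle\delta_0,\Chi_{B}(H^{(\omega^{(0)},\,\omega_0=s)})\delta_0\rangle\,\d s=|B|$ holds for every Borel $B$. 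Combining $\rho\ge\rho_{\mathrm{min}}\Chi_{[W_-,W_+]}$ with positivity gives
\[
N(E_2)-N(E_1)\ \ge\ \rho_{\mathrm{min}}\,\widetilde m(]E_1,E_2]),\qquad
\widetilde m(B):=\int_{W_-}^{W_+}\mathbb E\bigl[\langle\delta_0,\Chi_B(H^{(\omega^{(0)},\,\omega_0=s)})\delta_0\rangle\bigr]\d s ,
\]
and the spectral averaging identity forces $\widetilde m\le$ Lebesgue measure, so $\widetilde m$ has an $L^1$-density $\widetilde\varphi\le1$. Lebesgue differentiation then reduces everything to showing $\widetilde\varphi(E)\ge c_\delta>0$ for a.e.\ $E\in[-2d+W_-+\delta,\,2d+W_+-\delta]$.

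Next I would identify $\widetilde\varphi$. With $g_0(z;\omega^{(0)}):=\langle\delta_0,(H^{(\omega^{(0)},\,\omega_0=0)}-z)^{-1}\delta_0\rangle$, the Aronszajn--Donoghue/Krein description of the rank-one family (equivalently, the change of variables ``coupling $\mapsto$ eigenvalue'' together with rank-one interlacing) gives that for a.e.\ $E$,
\[
\widetilde\varphi(E;\omega^{(0)})\ \ge\ \Chi_{[W_-,W_+]}\!\bigl(s^*(E;\omega^{(0)})\bigr),\qquad s^*(E;\omega^{(0)}):=-1/g_0(E+\mathrm{i}0;\omega^{(0)})
\]
(the coupling at which $E$ becomes a pole of $s\mapsto\langle\delta_0,(H^{(\omega^{(0)},s)}-E-\mathrm{i}0)^{-1}\delta_0\rangle$); in the absolutely continuous regime $\Im g_0(E+\mathrm{i}0)>0$ the indicator is to be replaced by an explicit positive Lorentzian integral, handled the same way. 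Taking the expectation over $\omega^{(0)}$ and using the Schur complement at the origin, $s^*(E;\omega^{(0)})=E+\Gamma(E+\mathrm{i}0;\omega^{(0)})$ with $\Gamma(z;\omega^{(0)})=\sum_{y,y'\sim0}\langle\delta_y,(H^{(\omega^{(0)})}_{\Zd\setminus\{0\}}-z)^{-1}\delta_{y'}\rangle$ the self-energy of the neighbours of $0$, one is left with
\[
\widetilde\varphi(E)\ \ge\ \mathbb P_{\omega^{(0)}}\bigl[\Gamma(E+\mathrm{i}0;\omega^{(0)})\in[W_--E,\,W_+-E]\bigr].
\]

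The hard part will be this last probability bound, uniform in $E\in[-2d+W_-+\delta,\,2d+W_+-\delta]$. Note the target interval has length $W_+-W_-$ and, for such $E$, its centre stays in the fixed range $\bigl[-2d-\frac{W_+-W_-}{2}+\delta,\ 2d+\frac{W_+-W_-}{2}-\delta\bigr]$; the loss of $\delta$ near $\partial\Sigma$ is exactly what keeps the centre away from the endpoints of the effective range of the self-energy. I would argue as follows: freeze $\omega^{(0)}$ away from one neighbour $e_1$ of the origin; a Krein computation shows $\omega_{e_1}\mapsto s^*(E;\omega^{(0)})$ is a non-constant real Möbius map, so $\{\omega_{e_1}:s^*(E;\omega^{(0)})\in[W_-,W_+]\}$ is a non-degenerate interval (or a complement of one). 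It then remains to show that, with probability bounded below over the frozen data, this interval overlaps $[W_-,W_+]$ in positive Lebesgue measure; this is where $[W_-,W_+]\subseteq\supp\rho$ and $E\in\mathrm{int}\,\Sigma$ enter, since for such $E$ one can realise the energy $E$ by a finite local configuration with site energies in $[W_-,W_+]$ on a fixed box $B\ni0$ and a hopping energy in $(-2d,2d)$ — an event of probability at least $(\rho_{\mathrm{min}}(W_+-W_-))^{|B|}>0$ — and on that event an intermediate-value argument forces $s^*(E;\omega^{(0)})$ to sweep through $[W_-,W_+]$ as $\omega_{e_1}$ varies over $[W_-,W_+]$. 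Integrating $\rho(\omega_{e_1})\ge\rho_{\mathrm{min}}$ over that overlap and then over the frozen data produces $c_\delta>0$, and hence $n(E)\ge\rho_{\mathrm{min}}c_\delta=:C_\delta$ for a.e.\ $E$ in the interval.
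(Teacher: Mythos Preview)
Your rank-one spectral-averaging framework is quite different from the paper's finite-volume argument, and the opening moves---expressing $\d N$ as the averaged spectral measure at the origin, freezing $\omega^{(0)}$, and invoking Aronszajn--Donoghue to reduce to the self-energy $\Gamma(E+\mathrm{i}0;\omega^{(0)})$---are all sound. The problem is in what you yourself flag as ``the hard part,'' and the gap there is real. You need a uniform lower bound on $\mathbb{P}_{\omega^{(0)}}\bigl[\Gamma(E+\mathrm{i}0)\in[W_{-}-E,W_{+}-E]\bigr]$, but your mechanism---a finite-box event together with an intermediate-value argument---cannot deliver it: $\Gamma$ depends on the \emph{entire} environment $\omega^{(0)}$ on $\mathbb{Z}^{d}\setminus\{0\}$, so prescribing the configuration on a fixed box $B$ does not determine, or even usefully constrain, $\Gamma$. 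There is no compactly supported eigenfunction that would make ``$E$ is realised by a local configuration'' meaningful here; the Weyl-sequence reasoning behind $\Sigma=[-2d,2d]+\supp\rho$ uses boxes of diverging size and gives nothing uniform. In the absolutely continuous regime you also need control on $\Im\Gamma(E+\mathrm{i}0)$ (the Lorentzian width), which ``handled the same way'' does not address.

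The paper sidesteps exactly this obstruction by a decoupling argument. It works in finite volume $\Lambda_{L}$, partitions into cubes $\Gamma_{j}$ of \emph{fixed} side $\ell$, and performs Wegner's collective change of variables in all of $\{\omega_{x}\}_{x\in\Gamma_{j}}$ at once (not a single site). The crucial step is then to insert Dirichlet boundary conditions along $\partial\Gamma_{j}$: this is a rank-$O(\ell^{d-1})$ perturbation, so it shifts the eigenvalue counting function by at most $O(\ell^{d-1})$, negligible against the $\ell^{d}$ sites in $\Gamma_{j}$. After decoupling, the contribution of $\Gamma_{j}$ is governed by the \emph{free} finite-volume Laplacian $\mathcal{L}_{\Gamma_{j}}$, whose integrated density of states is explicit and strictly increasing on $(-2d,2d)$; this is what produces the positive constant. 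In short, the paper trades your uncontrolled infinite-volume self-energy for a deterministic finite-volume quantity via a surface-versus-volume estimate. If you want to rescue the single-site viewpoint you will need a comparable decoupling step; without one the required probability bound on $\Gamma$ remains unproved.
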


\begin{remark}
  The theorem can be generalized in a straightforward manner to
  incorporate general bounded, self-adjoint and $\mathbb{Z}^d$-translation
  invariant unperturbed
  operators $H_{0}$ instead of $\mathcal{L}$.  In this case,
  \eqref{main-eq} holds for all $E \in \sigma(H_{0}) + [W_{-} +\delta,
  W_{+} -\delta]$.
\end{remark}

\begin{remark}
  The lower bound constant $C_\delta > 0$ can be expressed in terms of the
  integrated density of states $N^{(0)} (E)$ for the unperturbed operator
  $\mathcal{L}$. 
  For any $\delta > 0$ small enough, we cover the interval $[-2d + W_{-}
  +\delta, 2d + W_+ - \delta ]$ by finitely many intervals $I_j = [ E_j -
  \delta , E_j + \delta]$ of width $2 \delta$ and centered at $E_j$.
  We can take $C_\delta$ to be 
  \begin{equation}
    \label{lowerbd1} 
    C_\delta = \min_{j} \left\{ \frac{ ( \delta \rho_{\mathrm{min}}
        )^{\alpha_{E_{j}}} }{2 \delta} \left[ N^{(0)} ( E_j - W_- - 2 \delta) -
        N^{(0)} ( E_j - W_+ + 2 \delta ) \right] \right\}, 
  \end{equation}
  where the positive constants $\alpha_{E_{j}} > 0$ are defined in the proof of
  Theorem~\ref{main}.  The difference of the integrated densities of states
  for $\mathcal{L}$ on the right of (\ref{lowerbd1}) is strictly positive, see
  the end of the proof of Theorem~\ref{main}.
\end{remark}

\noindent
In Theorem~\ref{main}, we do not require that $\rho$ is essentially
bounded as in (\ref{rho-above}), that the support is bounded, nor that
$[W_- , W_+]$ is the entire support of $\rho$.  However, if we add the
latter two hypotheses, we obtain the following special case of
Theorem~\ref{main}.

\begin{corollary}
  Under the hypotheses of Theorem~\ref{main}, and the additional condition
  that \eqref{rho-below} holds on the entire support of $\rho$, i.e.\
  $\rho(w)=0$ for almost every $w\in \mathbb{R} \setminus [W_{-}, W_{+}]$,
  then the density of states is strictly positive Lebesgue-almost everywhere
  on $\Sigma$.
\end{corollary}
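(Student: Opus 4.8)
The plan is to invoke Theorem~\ref{main} directly and then exhaust $\Sigma$ by the intervals it produces. First I would observe that the two additional hypotheses of the corollary pin down the support of $\rho$ precisely. On one hand, $\rho(w)=0$ for Lebesgue-a.e.\ $w\notin[W_-,W_+]$ gives $\supp\rho\subseteq[W_-,W_+]$. On the other hand, \eqref{rho-below} says $\rho(w)\ge\rho_{\mathrm{min}}>0$ for a.e.\ $w\in[W_-,W_+]$, so every nonempty relatively open subinterval of $[W_-,W_+]$ carries positive mass under the measure $\rho(w)\,\d w$; hence $[W_-,W_+]\subseteq\supp\rho$. Combining the two inclusions, $\supp\rho=[W_-,W_+]$, and therefore, using the formula $\Sigma=[-2d,2d]+\supp\rho$ recalled above, $\Sigma=[-2d+W_-,\,2d+W_+]$ is a compact interval whose interior is $\,]-2d+W_-,\,2d+W_+[\,$.

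Next I would apply Theorem~\ref{main}, whose hypotheses coincide with those of the corollary (the absolute continuity of $N$ with Lebesgue derivative $n$ is assumed in both). It yields, for each $\delta>0$ small enough, a constant $C_\delta>0$ with $n(E)\ge C_\delta$ for Lebesgue-a.e.\ $E\in[-2d+W_-+\delta,\,2d+W_+-\delta]$; in particular $n(E)>0$ off a Lebesgue-null set $\mathcal{N}_\delta$ contained in that interval. Taking $\delta=1/k$ for $k\in\N$ large enough and setting $\mathcal{N}:=\bigcup_k\mathcal{N}_{1/k}$, which is Lebesgue-null as a countable union of null sets, I obtain $n(E)>0$ for every $E\in\bigl(\bigcup_k[-2d+W_-+1/k,\,2d+W_+-1/k]\bigr)\setminus\mathcal{N}$, and the union of these nested intervals is exactly the open interval $\,]-2d+W_-,\,2d+W_+[\,$. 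Hence $n(E)>0$ for Lebesgue-a.e.\ $E$ in the interior of $\Sigma$.

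Finally, since $\Sigma\setminus\mathrm{int}\,\Sigma=\{-2d+W_-,\ 2d+W_+\}$ is a two-point set, hence Lebesgue-null, the conclusion of the previous paragraph already gives $n(E)>0$ for Lebesgue-a.e.\ $E\in\Sigma$, which is the assertion of the corollary. There is no substantial obstacle here beyond bookkeeping: the only step requiring a moment's care is the identification $\supp\rho=[W_-,W_+]$, which is precisely what makes the family of intervals furnished by Theorem~\ref{main} exhaust, up to a Lebesgue-null set, all of $\Sigma$; once this is in place the corollary follows at once from Theorem~\ref{main} together with the countable-union argument.
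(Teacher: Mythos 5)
Your proposal is correct and is precisely the deduction the paper leaves implicit (the corollary is stated without a written proof). Identifying $\supp\rho=[W_-,W_+]$ from the two-sided hypotheses on $\rho$, hence $\Sigma=[-2d+W_-,2d+W_+]$, and then exhausting the interior of $\Sigma$ by the intervals $[-2d+W_-+1/k,\,2d+W_+-1/k]$ while taking a countable union of the exceptional null sets from Theorem~\ref{main}, is exactly the intended argument; the final observation that $\partial\Sigma$ is a two-point set, hence Lebesgue-null, closes it.
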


\begin{remark}
  After we completed this work, we learned that a similar result is contained
  in Frank Jeske's unpublished PhD-thesis \cite{Jes92}, which was supervised
  by Werner Kirsch.  We thank Ivan Veseli\'c for informing us about the
  existence of \cite{Jes92}.
\end{remark}

\begin{remark}
  The question arises naturally whether a similar lower bound for the
  density of states $n$ does also hold in the case of \emph{continuum}
  random Schr\"odinger operators, that is, Schr\"odinger operators on $L^2
  (\mathbb{R}^d)$. For $d=1$ and for
  alloy-type random potentials with suitably well-behaved single-site
  potentials, the answer is affirmative. The argument proceeds as in
  the discrete case with some obvious modifications that are well
  known from proofs of upper Wegner estimates for continuum models.
  The key point is that the finite-rank-perturbation argument, which
  allows us to proceed from \eqref{Gint} to \eqref{Gfinal} below is
  still valid in the one-dimensional continuum case. Indeed, suppose
  we have two Schr\"odinger operators on an interval that differ only
  by a boundary condition (Dirichlet versus none, say) that is imposed
  at an \emph{interior} point of the interval. Then it is well known
  from, e.g., the theory of point interactions that these two
  Schr\"odinger operators differ by a rank-2 perturbation.  However,
  for $d\ge 2$, different boundary conditions along a
  finite hypersurface $S$ give rise to an infinite-rank perturbation.
  Thus, in the case $d\ge 2$ one needs an alternative argument why
  different boundary conditions along $S$ for Schr\"odinger operators
  in a finite volume $\Lambda$ (with $S$ in the interior of $\Lambda$)
  lead to eigenvalue counting functions that differ by a term
  proportional to the area of $S$. Furthermore, this error term would
  be required to remain bounded as $\Lambda\uparrow\mathbb{R}^{d}$.
  But this is a delicate issue in view of \cite{Kir87,Kir89}.
\end{remark}

\begin{proof}[Proof of Theorem~\ref{main}]
  The proof follows Wegner's arguments \cite{Weg81}, except that we introduce
  a partition of the finite volume into cubes of large but fixed size. This
  allows us to get a nontrivial result in the macroscopic limit $L \rightarrow
  \infty$, a problem seemingly ignored in \cite{Weg81}.  A similar
  partitioning strategy was used in the proof of a (upper) Wegner estimate for
  continuum random Schr\"odinger operators by spectral averaging \cite{CH},
  see also \cite{FiHu97b, HuLe01a} for the case of Gaussian or other types of
  unbounded random potentials.

  \noindent  
  \textbf{1.} \quad  
  Let $E_{1}, E_{2} \in \mathbb{R}$ such that $E_{2} - E_{1} > \varepsilon$
  for some $\varepsilon >0$. We consider a sequence of expanding cubes
  $\Lambda_{L}$ in $\mathbb{Z}^{d}$ with volume $|\Lambda_{L}|=L^{d}$.
  Finally, we pick a smooth, monotone increasing switch function
  $f_{\varepsilon} \in C^{1}(\mathbb{R})$ such that $f_{\varepsilon}(\lambda)
  = 0$ for all $\lambda \le 0$ and $f_{\varepsilon}(\lambda) =1$ for all
  $\lambda \ge \varepsilon$.  We let $\mathbb{E}$ denote the expectation
  associated with the probability measure $\mathbb{P}$, and we write
  $F_{\varepsilon,L}(\lambda, \omega ) := \tr_{\Lambda_{L}}
  f_{\varepsilon}(\lambda - H^{(\omega)}_{\Lambda_{L}})$. Then we have
  \begin{align}\label{switch0}
    N(E_{2}) - N(E_{1}) & \ge \lim_{L\to\infty}  \left\{
     \frac{1}{L^{d}} \; \mathbb{E} \Bigl[ \tr_{\Lambda_{L}} \bigl(
       f_{\varepsilon}(E_{2} - H_{\Lambda_{L}}) -
       f_{\varepsilon}(E_{1} + \varepsilon
       - H_{\Lambda_{L}})
       \bigr)\Bigr] \right\} \nonumber\\
    & =  \lim_{L\to\infty} \left\{
     \frac{1}{L^{d}} \;
    \int_{E_{1}+\varepsilon}^{E_{2}}\! \d\lambda \; \mathbb{E}\left[
      \frac{\partial}{\partial\lambda}\; F_{\varepsilon,L}(\lambda,
      \cdot) \right] \right\}.
  \end{align}
  The quantity
  $F_{\varepsilon,L}(\lambda, \omega )$
   depends on
  $\lambda$ and $\omega$ only through the differences $\{\omega_{x}
  -\lambda\}_{x\in\Lambda_{L}}$, and it is a monotone decreasing function
  in each of those differences.
We partition the cube $\Lambda_{L}$ into $(L/\ell)^{d}$ smaller cubes
$\Gamma_{j}$ of the same (fixed) volume $\ell^{d}$. We consider only
those big cubes $\Lambda_{L}$ for which such a partition is possible.
We will take $L \rightarrow \infty$, and $\ell$ large but finite.
Therefore we get
  \begin{equation}
    \label{switch}
    \frac{\partial}{\partial\lambda} \; F_{\varepsilon,L}(\lambda, \omega)
    = - \sum_{j=1}^{(L/\ell)^{d}} \sum_{x\in\Gamma_{j}}
    \frac{\partial}{\partial\omega_{x}} \;
    F_{\varepsilon,L}(\lambda,\omega)
  \end{equation}
  for all $\lambda\in\mathbb{R}$ and all $\omega\in\Omega$.

  \noindent  
  \textbf{2.} \quad  
  We conclude from \eqref{switch0} and \eqref{switch} that
  \begin{equation}
    \label{Nstart}
    N(E_{2}) -N(E_{1}) \ge \rho_{\mathrm{min}}^{\ell^{d}} \lim_{L\to\infty}
    \left\{
    \frac{1}{(L/\ell)^{d}} \sum_{j=1}^{(L/\ell)^{d}}
    \mathbb{E}_{\Gamma_{j}^{c}} \left[ \int_{E_{1}+\varepsilon}^{E_{2}}\!
      \d\lambda \; G_{j}(\lambda,\cdot)\right] \right\}
  \end{equation}
  with
  \begin{equation}
    \label{Gdef}
    G_{j}(\lambda,\omega_{\Gamma_{j}^{c}}) := \frac{1}{\ell^{d}}
    \int_{[W_{-},W_{+}]^{\ell^{d}}}
    \! \Bigl(\prod_{y\in\Gamma_{j}}\d\omega_{y} \Bigr) \sum_{x\in\Gamma_{j}}
    \Bigl(- \frac{\partial}{\partial\omega_{x}} \Bigr)
    F_{\varepsilon,L}(\lambda,\omega).
  \end{equation}
  Here $\Gamma_{j}^{c} := \mathbb{Z}^{d}\setminus \Gamma_{j}$ denotes the
  complement of $\Gamma_{j}$, and (in slight abuse of notation) we have
  written $\omega =: (\omega_{\Gamma_{j}}, \omega_{\Gamma_{j}^{c}})$, where
  $\omega_{\Gamma_{j}} := (\omega_{x})_{x\in\Gamma_{j}}$. The partial disorder
  average $\mathbb{E}_{\Gamma_{j}^{c}}$ in \eqref{Gdef} extends only over the
  coupling constants $\omega_{\Gamma_{j}^{c}}$.

  \noindent  
  \textbf{3.} \quad  
  Following Wegner \cite{Weg81}, we are going to perform a change of variables
  in \eqref{Gdef} from $\omega_{\Gamma_{j}}$ to $\eta$: we fix an arbitrary
  point $x_{j}\in\Gamma_{j}$ and set $\eta_{x_{j}} := \omega_{x_{j}}$ and
  $\eta_{y} := \omega_{y} - \omega_{x_{j}}$ for all $y\in\Gamma_{j}\setminus
  \{x_{j}\}$. The Jacobian associated with this change of variables is $1$,
  whence
  \begin{align}
    \label{Gchange}
    G_{j}(\lambda,\omega_{\Gamma_{j}^{c}}) =  \frac{1}{\ell^{d}}
    \int_{[W_{-},W_{+}]} \!\d\eta_{x_{j}} & \int_{[W_{-}-\eta_{x_{j}}, W_{+}
      -\eta_{x_{j}}]^{\ell^{d} -1}} \!
    \Bigl(\prod_{y\in\Gamma_{j}\setminus\{x_{j}\}} \d\eta_{y} \Bigr)
    \nonumber\\
    & \qquad\times \Bigl(- \frac{\partial}{\partial\eta_{x_{j}}} \Bigr)
    F_{\varepsilon,L}\bigl(\lambda,(\omega_{\Gamma_{j}}(\eta),
    \omega_{\Gamma_{j}^{c}})\bigr) .
  \end{align}
  Now, fix $\delta \in ]0, (W_{+} -W_{-})/4[$.  One obtains a lower bound for
  \eqref{Gchange} by restricting first the integration over $\eta_{x_{j}}$ to
  $[W_{-} + \delta/2, W_{+} -\delta/2]$ and then restricting the integration
  over $\eta_{y}$ to $[-\delta/2,\delta/2]$, for all
  $y\in\Gamma_{j}\setminus\{x_{j}\}$. This gives
  \begin{align}
    \label{Ghalfint}
    G_{j}(\lambda,\omega_{\Gamma_{j}^{c}}) \ge  \frac{1}{\ell^{d}}
    \int_{[-\delta/2,\delta/2]^{\ell^{d} -1}} \!
    \Bigl(\prod_{y\in\Gamma_{j}\setminus\{x_{j}\}} \d\eta_{y} \Bigr)
    \Bigl[ & F_{\varepsilon,L}\bigl(\lambda,(\omega_{\Gamma_{j}}(\eta^{-}),
    \omega_{\Gamma_{j}^{c}}) \bigr)\nonumber \\
    & - F_{\varepsilon,L}\bigl(\lambda,(\omega_{\Gamma_{j}}(\eta^{+}),
    \omega_{\Gamma_{j}^{c}})\bigr) \Bigr]
  \end{align}
  with $\eta^{\pm} := \bigl( W_{\pm} \mp \delta/2,
  (\eta_{y})_{y\in\Gamma_{j}\setminus\{x_{j}\}} \bigr)$. Note that in
  \eqref{Ghalfint} one has
  $\bigl(\omega_{\Gamma_{j}}(\eta^{-})\bigr)_{x} \le W_{-} +\delta$ and
  $\bigl(\omega_{\Gamma_{j}}(\eta^{+})\bigr)_{x} \ge W_{+} -\delta$ for
  all $x\in\Gamma_{j}$.
  Since $F_{\varepsilon,L}$ is a decreasing function in each
  $\omega_{x}$, we arrive at
  \begin{equation}
    \label{Gint}
    G_{j}(\lambda,\omega_{\Gamma_{j}^{c}}) \ge
    \frac{\delta^{\ell^{d}-1}}{\ell^{d}}
    \Bigl[ F_{\varepsilon,L}\bigl(\lambda,(\omega_{\Gamma_{j}}^{-},
    \omega_{\Gamma_{j}^{c}}) \bigr)
    - F_{\varepsilon,L}\bigl(\lambda,(\omega_{\Gamma_{j}}^{+},
    \omega_{\Gamma_{j}^{c}})\bigr)   \Bigr]
  \end{equation}
  with spatially constant couplings $\omega_{\Gamma_{j}}^{\pm} :=
  (W_{\pm} \mp \delta)_{x\in\Gamma_{j}}$ inside the small cube
  $\Gamma_{j}$.

  \noindent  
  \textbf{4.} \quad  
  Next, we will use a Dirichlet decoupling of the small cube
  $\Gamma_{j}$. In the first (i.e.\ the positive) term on the
  right-hand side of \eqref{Gint}, this can be done straight away,
  because $H_{\Lambda_{L}} \le H_{\Gamma_{j}} \oplus H_{\Lambda_{L}
    \setminus\Gamma_{j}}$. To do the replacement in the second (i.e.\
  the negative) term, one has to take into account the error that
  arises from introducing the additional Dirichlet boundary condition
  along $\partial\Gamma_{j} \setminus \partial \Lambda_{L}$. But this
  is a perturbation of rank $\mathcal{O}(\ell^{d-1})$ and it is independent
  of the coupling constants. Furthermore, recall that $0 \le
  f_{\varepsilon} \le 1$. Thus there is a constant $D\in
  ]0,\infty[$, which depends only on $d$, such that
  \begin{align}
    \label{Gfinal}
    G_{j}(\lambda,\omega_{\Gamma_{j}^{c}}) & \ge
    \frac{\delta^{\ell^{d}-1}}{\ell^{d}} \; \Bigl\{
      \tr_{\Gamma_{j}} \bigl[ f_{\varepsilon}(\lambda -
      W_{-} - \delta  - \mathcal{L}_{\Gamma_{j}})
      -  f_{\varepsilon}(\lambda -
      W_{+} + \delta  - \mathcal{L}_{\Gamma_{j}})
      \bigr] \nonumber\\
      & \hspace*{1.7cm} - D \ell^{d-1} \Bigr\} \nonumber\\
      & \ge
      \frac{\delta^{\ell^{d}-1}}{\ell^{d}} \; \Bigl\{
      \tr_{\Gamma_{j}} \bigl[ f_{\varepsilon}(E_{1} -
      W_{-} - \delta  - \mathcal{L}_{\Gamma_{j}})
      -  f_{\varepsilon}(E_{2} -
      W_{+} + \delta  - \mathcal{L}_{\Gamma_{j}})
      \bigr] \nonumber\\
      & \hspace*{1.7cm} - D \ell^{d-1} \Bigr\}
  \end{align}
  for all $\lambda \in [E_{1}, E_{2}]$ and all $\omega_{\Gamma_{j}^{c}}$.
  The contributions from $H_{\Lambda_{L}\setminus\Gamma_{j}}$ have
canceled, so the right side in (\ref{Gfinal}) is
independent of $L$.
  Inserting \eqref{Gfinal} into \eqref{Nstart} and taking the limit
  $\varepsilon\downarrow 0$, we arrive at the
  estimate
  \begin{equation}
    \label{diffquot}
    \frac{N(E_{2}) -N(E_{1})}{E_{2} -E_{1}} \ge
    \frac{(\delta\,\rho_{\mathrm{min}})^{\ell^{d}}}{\delta}
    \bigl( K_{\ell}(E_{1},E_{2}) -D/\ell\bigr)
  \end{equation}
  for the difference quotient of the integrated density of states of $H$.
  The lower bound in \eqref{diffquot}
  is expressed in terms of the difference
  \begin{equation}
    K_{\ell}(E_{1},E_{2}) :=
    N_{\Lambda_{\ell}}^{(0)}(E_{1} -W_{-} -\delta) -
    N_{\Lambda_{\ell}}^{(0)}(E_{2} - W_{+} +\delta)
  \end{equation}
  of the  Dirichlet finite-volume approximation
  $N_{\Lambda_{\ell}}^{(0)}(\lambda) := \ell^{-d} \tr_{\Lambda_{\ell}}
  \Chi_{]-\infty,\lambda]}(\mathcal{L}_{\Lambda_{\ell}}) $ for the integrated
  density of states of the free Laplacian $\mathcal{L}$. 

  \noindent 
  \textbf{5.} \quad  
  By hypothesis we know that $N(E)$ is
  absolutely continuous with respect to Lebesgue measure. Hence, we can take
  the monotone limit $E_{2}\downarrow E_{1} =:E$ in \eqref{diffquot} and
  obtain
  \begin{equation}
    \label{almostdone}
    n(E) \ge \frac{(\delta\,\rho_{\mathrm{min}})^{\ell^{d}}}{\delta}
    \bigl( K_{\ell}(E,E) -D/\ell\bigr)
  \end{equation}
  for Lebesgue-almost all $E\in\mathbb{R}$.  We now fix $E_0 \in \R$,
  and observe that $K_{\ell}(E,E) \ge K_{\ell}(E_{0}-\delta,
  E_{0}+\delta) =: K_{\ell}(E_{0})$, for all $E\in
  [E_{0}-\delta,E_{0}+\delta]$.  We next note that 
  \begin{equation}
    \label{limit1}
    K(E_{0}) := \lim_{\ell \to\infty} K_{\ell}(E_{0}) = N^{(0)}(E_{0}-
    W_{-} -2\delta) - N^{(0)}(E_{0} -W_{+} +2\delta) 
  \end{equation} 
  exists, where $N^{(0)}(\lambda) := \lim_{\ell\to\infty}
  N_{\Lambda_{\ell}}^{(0)}(\lambda) = \langle \delta_{0},
  \Chi_{]-\infty,\lambda]}(\mathcal{L}) \delta_{0} \rangle$.  It is
  important to observe that $E_{0}- W_{-}-2\delta > E_{0} -W_{+}
  +2\delta$, since $0 < \delta < (W_+ - W_- ) / 4$, and that for all
  $E_{0} \in ]-2d + W_{-}+2\delta, 2d + W_{+} -2\delta[$, we have $-2d
  < E_0- W_- -2\delta < 2d + (W_+ - W_-) - 4 \delta$, and $-2d - [
  (W_+ - W_-) - 4 \delta] < E_0 -W_+ +2\delta < 2d$.  Specifically, we
  have $-2d < E_0- W_- -2\delta$ and if $E_{0}- W_{-}-2\delta > 2d$,
  then the other energy satisfies $E_{0}-W_{+} +2\delta < 2d$.
  Consequently, $K (E_0)$ is strictly positive for $E_0$ on the
  specified range since
  \begin{equation}
    \label{unperturbed}
    N^{(0)}(\lambda_{2}) -N^{(0)}(\lambda_{1}) >0,
  \end{equation}
  whenever $\lambda_{1} < \lambda_{2}$, and at least one of the
  $\lambda_{j}$'s lies in the interior of $\sigma(\mathcal{L}) = [-2d,2d]$.
  Thus, there exists a finite length $\ell_{E_{0}}$ such that
  \begin{equation}
    \label{freebound}
    K_{\ell_{E_{0}}}(E,E) - D/\ell_{E_{0}} \ge
    K(E_{0})/2 \qquad \text{for all}\quad
    E\in [E_{0}-\delta,E_{0}+\delta].
  \end{equation}
  The theorem follows from \eqref{almostdone}, \eqref{freebound} and by
  covering the interval $] -2d + W_{-}+\delta, 2d+ W_{+} - \delta[$ by a
  finite number of small intervals of length $2\delta$.
\end{proof}


\begin{acknowledgement}
  The authors thank Fran\c{c}ois Germinet for the kind hospitality at the
  Universit\'e de Cergy-Pontoise, and Abel Klein for several discussions.
\end{acknowledgement}


\end{document}